\documentclass[preprint]{revtex4-2}
\usepackage{amsmath,amssymb,amsthm}
\usepackage{graphicx}
\usepackage{placeins}
\usepackage{microtype}
\usepackage{booktabs}
\usepackage{hyperref}

\newtheorem{theorem}{Theorem}

\newtheorem{proposition}{Proposition}

\newcommand{\E}{\mathbb{E}}
\newcommand{\PP}{\mathbb{P}}

\newcommand{\cdel}{c_\delta}

\begin{document}

\title{The collapse time of weak measurement on $N$ entangled sites }

\author{Truong-Son P. Van}
\email{tsvan@hcmut.edu.vn}
\affiliation{Ho Chi Minh City University of Technology, Vietnam National University Ho Chi Minh City, Vietnam}

\author{David W. Snoke}
\email{snoke@pitt.edu}
\affiliation{Department of Physics and Astronomy, University of Pittsburgh, and Pittsburgh Quantum Institute, Pittsburgh, PA, 15260, USA}

\date{\today}

\begin{abstract}
  We determine the time required for local weak measurements to collapse to
  one occupied site in an initial superposition of $N$ alternatives in the model of
  PRA 111, 032217 (2025). For equal initial occupation
  probabilities and a fixed confidence threshold $1-\delta$, we prove
  $T_N/\ln N\to1/2$ in $L^1$, with time measured in units of the
  characteristic measurement time $\tau_m$ of each detector.
  We also connect the asymptotic stabilization theorem of that work
  to a finite measurement: the first site to reach $1-\delta$ is the
  eventual outcome with conditional probability exactly $1-\delta$.
  The result corrects the previously proposed double-logarithmic
  asymptotic scaling. The finite-range numerical fit does not determine
  a constant asymptotic correction or a precise finite-$N$ law.
\end{abstract}

\maketitle

\section{Introduction and main results}

Nonlocal collapse of an extended wave function impinging on multiple
detectors is one of the most striking effects of quantum mechanics.
For example, a plane wave hitting a cloud chamber with billions of
atoms will create distinct particle tracks, even if the initial state
is a superposition of all the atoms being excited; similarly, in the
canonical single-photon two-slit interference experiment, one
photographic film grain changes color at a time, even though an
extended wave state has initially put millions of film grains into a
superposition of each detecting a photon.

Prior work \cite{VJS} examined the time scale for this collapse,
namely the time for a macroscopic superposition of millions to
billions of atoms in excited states to have just one of them in a
definite excited state, from the perspective of weak measurement
theory and/or spontaneous nonlocal collapse theory. In practice, if
the time scale for this macroscopic collapse is very long, for
example, scaling linearly with the number $N$ of atoms in the
superposition, the scenario would be indistinguishable from an
Everett-Wheeler many-worlds scenario, in which many macrosopic
``cats'' coexist for indefinite periods of time. (For a review of quantum interpretations and the many-worlds approach, see Ref.~\cite{snokeQM}.)

The numerical model of Ref.~\cite{VJS} invoked random non-unitary
noise to generate the collapse of the macroscopic entangled state.
This noise can be seen as either residual from weak measurements of
``observers'' detecting the environment, in the Copenhagen
interpretation of quantum mechanics \cite{jordanbook}, or as a
physical, nonlocal noise term in the spontaneous collapse
interpretation (for an overview of this approach see
Ref.~\cite{snokecollapse3} and references therein).
Both of these interpretive schemes give the same noise term.
The numerical results of this model
showed that the time scale for the collapse does not scale
uncontrollably with system size; even for very large superpositions,
the time for collapse remained a few times the collapse time for a
single detector.

However, the exact scaling law for the collapse remained unclear. It
was shown that over a wide range of $N$, the collapse time
approximated $C \ln \ln N$, for some constant $C$. In this work, we
show that this scaling is
not general, and rigorously show that the scaling is
$\frac{1}{2}\ln N$.
This implies, for example, if the total number of atoms in the visible
universe, of order $10^{80}$, were entangled, the leading logarithmic
estimate for the collapse time would be about 128 times the
two-site mean collapse time $\langle T_2\rangle$.
If this single-choice collapse time corresponds to a typical coherence time, which ranges from 
1 ps in solid-state detectors to 1 ns in gases \cite{gas}, the leading estimate is a less than a millisecond for the entire visible universe.

We consider the model of Ref.~\cite{VJS}: a single excitation
initially superposed
equally among $N \ge 2$ sites, with $|\alpha_n(0)|^2 = 1/N$. Each site is
continuously monitored by its own detector, and there is no Hamiltonian
dynamics. We ask how long these local measurements take to concentrate
the occupation probability on a single site, and how this time depends
on $N$. Because the atoms are nonlocally entangled, the collapse time
is not determined only by local properties; a weak measurement
increasing the probability of detection of the excitation on one atom
implicitly makes it more likely that the excitation is {\em not} on
one of the other atoms, no matter how far away.

\subsection{The stochastic model and the collapse criterion}
\label{sec:model}

We first recall the stochastic model in \cite{VJS}. Let $|n\rangle$
denote the state with the excitation at site $n$ and all other sites
unoccupied. The conditional quantum state has the form
\[
  |\psi(t)\rangle=\sum_{n=1}^N\alpha_n(t)|n\rangle,
  \qquad \sum_{n=1}^N|\alpha_n(t)|^2=1.
\]
The variables used in \cite{VJS} are the local Bloch coordinates
\[
  U_n(t):=2|\alpha_n(t)|^2-1.
\]
Thus $U_n=1$ means that the excitation is at site $n$, while $U_n=-1$
means that site $n$ is unoccupied.

As in \cite{VJS}, $\tau_m$ denotes the characteristic measurement time
of each detector, assumed identical. We measure time in units of
$\tau_m$ throughout.
In these units and the It\^o convention, Eq.~(15) of \cite{VJS} is
\begin{equation}
  dU_n=(1-U_n^2)\,d\widehat W^n
  -\sum_{k\ne n}(1+U_n)(1+U_k)\,d\widehat W^k,
  \qquad U_n(0)=\frac2N-1,
  \label{eq:original-model}
\end{equation}
where $\widehat W^1,\dots,\widehat W^N$ are independent standard
Brownian motions under a probability measure $\PP$. We write $\E$
for expectation under $\PP$ and $(\mathcal F_t)$ for their usual
augmented natural filtration. Unless stated otherwise, all probabilities,
expectations, and martingale statements in this paper refer to $\PP$.
The model starts from the deterministic initial condition displayed
above. We suppress the dependence of $\PP$ on $N$.

The first term describes the effect of the detector
at site $n$. The sum describes the effect of the other detectors:
because there is exactly one excitation, evidence about any site also
changes the occupation probabilities of the remaining sites. We write
$B_j=\widehat W^j$. Section~\ref{sec:repr} shows that detector $j$
records the signal $dY_j=U_j\,dt+dB_j$.

For the collapse-time problem, it is convenient to work directly with
the occupation probabilities
\[
  w_n(t):=|\alpha_n(t)|^2=\frac{1+U_n(t)}2,
  \qquad w_n(0)=\frac1N.
\]
Taking one half of \eqref{eq:original-model} and substituting
$1+U_n=2w_n$ and $1-U_n=2(1-w_n)$ gives
\begin{align}
  dw_n
  &=2w_n(1-w_n)\,d\widehat W^n
  -2w_n\sum_{k\ne n}w_k\,d\widehat W^k\notag\\
  &=2w_n\left(d\widehat W^n-\sum_{k=1}^Nw_k\,d\widehat W^k\right).
  \label{eq:weight-sde}
\end{align}
The common term enforces the normalization $\sum_nw_n=1$. Indeed,
summing the right side over $n$ gives zero on this simplex.

Fix a threshold $\delta\in(0,\tfrac12)$ and write
$\cdel:=\ln\frac{1-\delta}{\delta}>0$. We define collapse as the first
time one of the occupation probabilities reaches $1-\delta$:
\begin{equation}
  T_N:=\inf\Bigl\{t\ge0:\max_nw_n(t)\ge1-\delta\Bigr\}.
  \label{eq:collapse-time}
\end{equation}
In the original variables, this threshold is $U_n\ge1-2\delta$.
The question is therefore a first-passage problem for the coupled
system \eqref{eq:weight-sde}.

\subsection{The leading collapse-time estimate}

We write the largest occupation weight as
\[
  M(t):=\max_jw_j(t).
\]

\begin{theorem}[Collapse time from the weight equation]
  \label{thm:main}
  For every fixed $0<\delta<1/2$, $T_N$ is almost surely finite for
  $N\ge2$, and
  \begin{equation}
    \E\left|\frac{T_N}{\ln N}-\frac12\right|
    \le\frac{\sqrt\pi}{2\delta(1-\delta)\sqrt{\ln N}}
    \xrightarrow[N\to\infty]{}0.
    \label{eq:main}
  \end{equation}
  Consequently, $T_N/\ln N\to1/2$ in $L^1$ and in probability, and
  $\E T_N=\tfrac12\ln N+O_\delta(\sqrt{\ln N})$.
  For every $K>0$ we also have
  \begin{align}
    \PP\left(T_N\le\frac12\ln N-K\sqrt{\ln N}\right)
    &\le\frac{e^{-K^2}}{1-\delta},
    \label{eq:main-early}\\
    \PP\left(T_N>\frac12\ln N+K\sqrt{\ln N}\right)
    &\le\frac{\sqrt\pi}{2\delta K}.
    \label{eq:main-late}
  \end{align}
  If the time in \eqref{eq:main-early} is nonpositive, its event is empty.
\end{theorem}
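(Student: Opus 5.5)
The plan is to pass from the nonlinear system \eqref{eq:weight-sde} to an explicit Bayesian (Kallianpur--Striebel) representation, which I expect is the content of Section~\ref{sec:repr}. Let $n^*$ be a site chosen uniformly at random, independent of independent Brownian motions $B_1,\dots,B_N$. Put $Y_j(t)=h^{n^*}_jt+B_j(t)$, where $h^{n}_j=1$ if $j=n$ and $h^n_j=-1$ otherwise, and set
\[
w_n(t)=\frac{e^{2Y_n(t)}}{\sum_je^{2Y_j(t)}} .
\]
Because $|h^n|^2=N$ does not depend on $n$, the likelihood of hypothesis $n$ is proportional to $e^{2Y_n}$. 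The innovations $d\widehat W^j=dY_j-U_j\,dt$ are then $\PP$-Brownian motions, and It\^o's formula gives \eqref{eq:weight-sde}. Consequently $(w_n)$ has the law of the original model, and we may work with this representation throughout. In it, $\ln\frac{w_n}{1-w_n}=2Y_n-\ln\sum_{j\ne n}e^{2Y_j}$, so collapse at site $n$ is the event $2Y_n-\ln\sum_{j\ne n}e^{2Y_j}\ge\cdel$.

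For the late bound \eqref{eq:main-late}, almost-sure finiteness and the upper half of \eqref{eq:main}, I will only use the true site, since $T_N\le\tau^*:=\inf\{t:w_{n^*}(t)\ge1-\delta\}$. Conditionally on $n^*$, the quantity $X_t:=\ln\frac{w_{n^*}}{1-w_{n^*}}$ equals $2t+2B_{n^*}(t)-\ln S_t$ with $S_t=\sum_{j\ne n^*}e^{-2t+2B_j(t)}$. Here $S_t$ is a martingale with $S_0=N-1$, and it is independent of $B_{n^*}$. This gives $\E\ln S_t\le\ln(N-1)$, and by Doob's inequality $\PP(\sup_{s\le t} S_s\ge \lambda(N-1))\le 1/\lambda$. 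The hitting time of the level $\ln N+\cdel$ by $2t+2B$ is an inverse-Gaussian time with mean $\tfrac12(\ln N+\cdel)$ and fluctuations of order $\sqrt{\ln N}$. Comparing $X$ with this Brownian motion with drift, after controlling $\ln S$ by the maximal inequality, will give $\E(T_N-\tfrac12\ln N)^+\le C_\delta\sqrt{\ln N}$. The explicit constant $\sqrt\pi/(2\delta)$ should come out of integrating a Gaussian tail, where $\int_0^\infty\PP(|B_1|>x)\,dx$-type integrals produce $\sqrt{\pi}$. It should also use the identity $\PP(\text{first site to reach }1-\delta\text{ is }n^*)=1-\delta$ announced in the abstract, which converts a bound on $\tau^*$ into one on $T_N$ at a cost of the factor $1/\delta$. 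Markov's inequality on $(T_N-\tfrac12\ln N)^+$ then yields \eqref{eq:main-late}.

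For the early bound \eqref{eq:main-early}, I will use optional stopping for the bounded $\PP$-martingales $w_n$. Let $\tau_n$ be the time at which $w_n$ first reaches $1-\delta$. Then $(1-\delta)\PP(T_N\le t,\ \text{site }n\text{ first})\le\E[w_n(T_N\wedge t)]$, and summing over $n$ loses exactly the factor $1/(1-\delta)$. What remains is to show that reaching $1-\delta$ before time $t=\tfrac12\ln N-K\sqrt{\ln N}$ requires a Gaussian deviation of size $K\sqrt{\ln N}$ in $B_{n^*}$, which should make the probability at most $e^{-K^2}$. The most convenient way to do this is probably to change to the reference measure under which the $Y_j$ are driftless, because there the density $\frac1N\sum_n e^{2Y_n-\sum_jY_j-Nt/2}$ at the collapse time is dominated by the collapsing site's term. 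Heuristically, collapse requires $2t+2B_{n^*}\gtrsim\ln N$, that is $B_{n^*}(t)\ge K\sqrt{\ln N}$, and $\PP(\sup_{s\le t}B\ge K\sqrt{\ln N})\le e^{-K^2\ln N/(2t)}\le e^{-K^2}$ for $t\le\tfrac12\ln N$.

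The main obstacle is the lower control of $\sum_{j\ne n}e^{2Y_j}$ that the early bound needs. A single site winning early could in principle be helped by all the other $N-1$ weights being atypically small, and the naive estimate via $\E\ln S_t$ goes in the wrong direction. I would first try to avoid this entirely by using the exact change-of-measure identity, which turns $\PP(\tau_n\le t)$ into a first-passage probability for one Brownian motion with drift, in the spirit of a Girsanov argument for Bessel-type ratios. Failing that, I would use a concentration bound on $\ln S_t$ from below. Finally, combining the two one-sided tail bounds and integrating gives \eqref{eq:main}: $\E|T_N/\ln N-\tfrac12|=\int_0^\infty\PP(|T_N-\tfrac12\ln N|>x\ln N)\,dx$. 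The Gaussian early tail contributes $O(1/\sqrt{\ln N})$, and the late part is bounded directly by $\E(T_N-\tfrac12\ln N)^+/\ln N$, which produces the stated constant.
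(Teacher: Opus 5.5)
Your late-time half can be made to work, but the early-collapse bound \eqref{eq:main-early}, and with it the negative-deviation half of \eqref{eq:main}, is not proved. Your optional-stopping step is vacuous as written: $\sum_n\E[w_n(T_N\wedge t)]=1$, so summing gives only $\PP(T_N\le t)\le1/(1-\delta)$. The usable form is $(1-\delta)\PP(T_N\le t)\le\E M(T_N\wedge t)\le\E M(t)$, because $M=\max_jw_j$ is a submartingale. Everything then rests on showing that $\E M(t)$ is small, and you do not supply this. Your heuristic that collapse needs $B_{n^*}\ge K\sqrt{\ln N}$ tacitly assumes $\ln S_s\ge\ln(N-1)$ for all $s\le t$. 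The true site crosses when $2s+2B_{n^*}(s)\ge\cdel+\ln S_s$, so a small $S_s$ helps it, and you need a lower bound on $S$ that is uniform in $s$. Jensen and Doob bound the positive martingale $S$ only from above; indeed $\E\ln S_s<\ln(N-1)$ for $s>0$. The plain second-moment bound is also useless here: $\mathrm{Var}\,S_s/(\E S_s)^2=(e^{4s}-1)/(N-1)\to\infty$ as $N\to\infty$, already at $s=\tfrac12\ln N-K\sqrt{\ln N}$. You also do not treat false crossings by sites $n\ne n^*$. These do reduce to the true site, since the reliability identity gives $\PP(T_N\le t)\le\PP(\tau^*\le t)/(1-\delta)$. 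You correctly name the lower control of the competitors' sum as the main obstacle, but you carry out neither fallback. A concentration estimate uniform in $s\le t$ near criticality would also not obviously give the exact constant $e^{-K^2}/(1-\delta)$.

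The missing idea is to bound $\E M(t)$ directly, with no lower-deviation estimate at all. For $F_p=\sum_jw_j^p$ with $1<p<2$, It\^o's formula gives $\frac{d}{dt}\E F_p=2p(p-1)\E[(1+F_2)F_p-2F_{p+1}]$. The symmetrization $F_{p+1}-F_2F_p=\tfrac12\sum_{i,j}w_iw_j(w_i-w_j)(w_i^{p-1}-w_j^{p-1})\ge0$ closes this to $\frac{d}{dt}\E F_p\le2p(p-1)\E F_p$. Gronwall then gives $\E F_p(t)\le N^{1-p}e^{2p(p-1)t}$. Using $\E M\le(\E F_p)^{1/p}$ with $p=\sqrt{\ln N/(2t)}$ gives $\E M(t)\le e^{-(\sqrt{\ln N}-\sqrt{2t})^2}$. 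In your representation $\E F_p(t)=\E[w_{n^*}(t)^{p-1}]$, and this bound is exactly what you would get by replacing the random denominator $\sum_je^{2Y_j}$ with the constant $N$. The ODE argument is what proves the resulting inequality.

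On the late side your route differs genuinely from the paper's entropy argument, and it is sound. Conditionally on $n^*$, the quantity $L=\sup_s\ln(S_s/(N-1))$ satisfies $\PP(L>x)\le e^{-x}$ and is independent of $B_{n^*}$. Since $2s+2B_{n^*}(s)$ hits level $a$ at mean time $a/2$, you get $\E\tau^*\le\tfrac12(\ln(N-1)+\cdel+1)$. The inverse-Gaussian variance then gives $\E(T_N-\tfrac12\ln N)_+\le\tfrac12\sqrt{(\ln N+\cdel+1)/2}+\tfrac12(\cdel+1)$. This is at most $\sqrt{\pi\ln N}/(2\delta)$ for $N\ge2$. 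The factor $1/\delta$ in the paper does not come from the reliability identity: since $T_N\le\tau^*$ pathwise, nothing needs converting. It comes from the entropy drift $-2(1-F_2)\le-2\delta$ before the crossing.
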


The constants in $O_\delta(\sqrt{\ln N})$ may depend on the fixed
threshold $\delta$. The tail bounds show that fluctuations on the
scale $\sqrt{\ln N}$ remain bounded with arbitrarily high probability.
The leading scale is therefore logarithmic, while its relative
uncertainty decreases as $N$ grows. This is the asymptotic claim proved
here. The estimates do not establish an $O(1)$ remainder or a precise
finite-$N$ formula. The additive constant in the numerical comparison
below is an empirical fit over the observed range.

The threshold $1-\delta$ also has a direct measurement meaning, which
Section~\ref{sec:calibration} develops. In every run the weight of a
single site $m^*$ tends to one as $t\to\infty$, and each site is this
eventual outcome with the Born probability $1/N$
(Proposition~\ref{cor:complete-collapse}). This is an almost-sure form of
the stabilization theorem in Appendix~B of \cite{VJS}. At every time,
$w_j(t)$ is the probability, given the detector records so far, that
site $j$ is the eventual outcome. Consequently, the first site to reach
$1-\delta$ is the eventual outcome with probability exactly
$1-\delta$, whenever the crossing happens
(Proposition~\ref{prop:born}). With probability $\delta$ the first
crossing belongs to another site and precedes the crossing of the
eventual outcome. The reliability statement thus gives a physical
meaning to the fixed threshold in Theorem~\ref{thm:main}.

\subsection{Comparison with the earlier simulations}
\label{sec:numerical-comparison}

The simulations of \cite{VJS} suggested that the mean collapse time
grows like a double logarithm of $N$. Theorem~\ref{thm:main} shows
that it grows like $\tfrac12\ln N$. In this subsection we return to
the same archived simulation means and ask which curve describes them
better. 

The original plot in \cite{VJS} divides each mean collapse time by
the mean for $N=2$. We write
\[
  \overline T_N:=\frac{\langle T_N\rangle}{\langle T_2\rangle}.
\]
The unit of time cancels, so $\overline T_N$ is a pure number and
$\overline T_2=1$. The two-site run is the smallest nontrivial case,
and it plays the role of the single-qubit reference in \cite{VJS}:
one excitation shared between two sites is a two-level system. (For
  $N=1$ the excitation is already localized, so the collapse time is
zero.)

Theorem~\ref{thm:main} measures time in units of the characteristic measurement
time $\tau_m$. For the archived Gaussian data,
$\langle T_2\rangle\approx 0.719304$ in these units. After normalization, the leading
term $\tfrac12\ln N$ therefore becomes
$\alpha\ln N$ with
\[
  \alpha=\frac{1}{2\langle T_2\rangle}\approx 0.695116.
\]
The slope of the logarithmic law in the normalized plot is $0.695$,
not $\tfrac12$. The published double logarithm is already expressed
in these normalized units,
\[
  h(N)=3.5\ln\ln\frac{N+4}4.
\]

We keep the slope $\alpha$ fixed by Theorem~\ref{thm:main} and fit
only an offset,
\begin{equation}
  g(N)=\alpha\ln N+\beta.
  \label{eq:numerical-curves}
\end{equation}
The fit uses the $215$ archived sizes $N_1,\dots,N_{215}$ in the range
$32\le N\le1000$. With the slope fixed, least squares reduces to the
average vertical gap between the data and the leading term,
\begin{equation*}
  \beta=\frac1{215}\sum_{i=1}^{215}
  \left(\overline T_{N_i}-\alpha\ln N_i\right)\approx 1.189635.
\end{equation*}

The offset $\beta$ is an empirical parameter fitted over the observed
range. It is not an asymptotic constant established by
Theorem~\ref{thm:main}. The theorem controls the remainder only up to
$O_\delta(\sqrt{\ln N})$. Determining the finite-$N$ law and its
dependence on $\delta$ is a separate problem, which we defer to future
work.

\begin{figure}[!t]
  \centering
  \includegraphics[width=\linewidth]{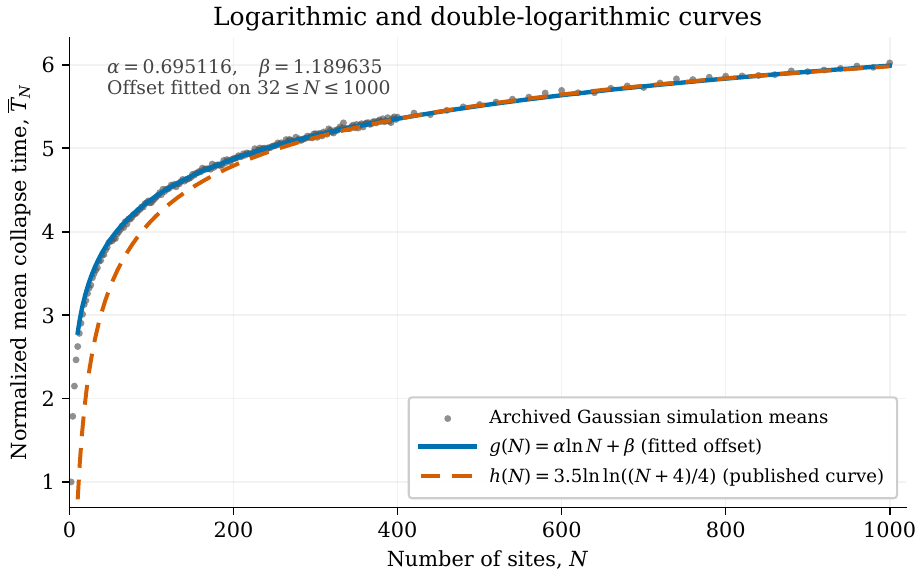}
  \caption{The archived Gaussian simulation means from \cite{VJS},
    divided by the two-site mean $\langle T_2\rangle$ as in the
    original plot, so
    $\overline T_2=1$. The solid blue curve is $g$ from
    \eqref{eq:numerical-curves}, with the slope $\alpha=0.695116$
    fixed by Theorem~\ref{thm:main} and a fitted offset $\beta$. The
    orange dashed curve is the published double logarithm $h$. All
    archived sizes $2\le N\le1000$ are shown. The offset and the errors in
  Table~\ref{tab:curve-rmse} use $32\le N\le1000$.}
  \label{fig:log-comparison}
\end{figure}

We measure each curve $f$ by its root-mean-square deviation from the
normalized means,
\[
  \operatorname{RMSE}(f)=
  \left[\frac1{215}\sum_{i=1}^{215}
  \left(\overline T_{N_i}-f(N_i)\right)^2\right]^{1/2}.
\]
To give the double logarithm a fair chance, we also refit both its
coefficient and its offset, $\widehat h(N)=A\ln\ln((N+4)/4)+B$. This
gives $A=2.738461$ and $B=1.170764$. Table~\ref{tab:curve-rmse}
collects the results. The logarithmic curve $g$ has one fitted
parameter, while $\widehat h$ has two. Even so, the error of
$\widehat h$ is $2.56$ times that of $g$. The error of the published
curve $h$ is $8.76$ times that of $g$. These ratios do not depend on
the normalization. Computing the errors with $\langle T_{N_i}\rangle$
in place of
$\overline T_{N_i}$ multiplies all three by the same factor
$\langle T_2\rangle$.

\begin{table}[!ht]
  \centering
  \begin{tabular}{lccc}
    \toprule
    Curve & Fitted parameters & RMSE & Ratio to $g$ \\
    \midrule
    $g$, logarithmic law & 1 & 0.02306 & 1 \\
    $h$, published double logarithm & 0 & 0.20195 & 8.76 \\
    $\widehat h$, refitted double logarithm & 2 & 0.05902 & 2.56 \\
    \bottomrule
  \end{tabular}
  \caption{Errors of the three curves on the same 215 normalized
    means, $32\le N\le1000$. The slope of $g$ is fixed by
    Theorem~\ref{thm:main}, and only its offset is fitted. The RMSE
    measures the discrepancy from these means, not Monte Carlo
    uncertainty. The last column does not depend on the unit of
  time.}
  \label{tab:curve-rmse}
\end{table}

\paragraph*{Why the double logarithm looked right.}
Figure~\ref{fig:log-comparison} shows that $g$ and $h$ are nearly
indistinguishable for $300\le N\le1000$.
This
agreement is local. With $z=\ln(N/500)$, the two curves expand as
\begin{align*}
  g(500e^z)&\simeq5.5095+0.6951z,\\
  h(500e^z)&\simeq5.5165+0.7180z-0.0708z^2+O(z^3),
\end{align*}
with rounded coefficients. Near $N=500$ the heights and the slopes
nearly agree.

Away from this region the two laws behave very differently. The slope
of $g$ with respect to $\ln N$ is the constant $\alpha$, whereas
\[
  \frac{dh}{d\ln N}=\frac{3.5N}{(N+4)\ln((N+4)/4)}
  \xrightarrow[N\to\infty]{}0.
\]
At $N=10^{80}$, the example of the introduction, the leading term
$\alpha\ln N$ is about $128$, while $h(10^{80})$ is about $18$. The
two differ by a factor of about seven, in any unit of time.

Agreement over a finite range therefore cannot identify
the general growth law, which is shown in
Theorem~\ref{thm:main}.

\subsection{Plan of the paper.}
Sections~\ref{sec:lower} and~\ref{sec:entropy} prove the leading law
directly from the weight equation. Section~\ref{sec:entropy} ends by
showing, from the same entropy identity, that in every run one weight
tends to one. Section~\ref{sec:calibration} gives the
threshold its measurement meaning. It reads the weights as Bayesian
probabilities given the detector records and proves that the first
crossing is correct with probability exactly $1-\delta$. Apart from
two facts quoted from Sections~\ref{sec:lower} and~\ref{sec:entropy},
it can be read directly after this introduction.
Section~\ref{sec:discussion} explains the physical scale and the scope
of the asymptotic result.

\section{Early collapse}
\label{sec:lower}
In this section, we show that the collapse time is unlikely to be
much shorter than $\tfrac12\ln N$.
To do this, we must control the largest occupation
weight.
We will first
bound $\E M(t)$ at a fixed time, then explain why this also controls
the probability of having crossed the threshold at any earlier time.

\subsection{A differentiable quantity that controls the largest weight}

The maximum is not differentiable when two weights coincide. To use
It\^o's formula, we instead introduce the power sums
\[
  F_p(t):=\sum_j w_j(t)^p,\qquad p>1.
\]
They control the maximum because $M(t)^p\le F_p(t)$. Initially all
weights equal $1/N$, so
\[
  F_p(0)=N\left(\frac1N\right)^p=N^{1-p}.
\]
The parameter $p$ will remain free until we have a bound to optimize.
The second power sum $F_2$ appears naturally in the
quadratic variation of the weights.

Recall that $B_j=\widehat W^j$. Equation~\eqref{eq:weight-sde} reads
\[
  dw_j=2w_j\left(dB_j-\sum_k w_k\,dB_k\right).
\]
Since $dB_j\,dB_k=\mathbf 1\{j=k\}\,dt$, we have
\begin{align*}
  \left(dB_j-\sum_k w_k\,dB_k\right)^2
  &=(dB_j)^2-2\sum_k w_k\,dB_j\,dB_k
  +\sum_{k,l}w_kw_l\,dB_k\,dB_l\\
  &=\left(1-2w_j+\sum_k w_k^2\right)dt\\
  &=(1-2w_j+F_2)\,dt.
\end{align*}
Consequently,
\[
  d\langle w_j\rangle_t=4w_j^2(1-2w_j+F_2)\,dt.
\]
Before applying It\^o's formula, we check that no weight reaches zero
in finite time. The coefficients in \eqref{eq:weight-sde} are locally
Lipschitz, so there is a unique local strong solution
\cite{KaratzasShreve}. Summing the
equations gives $dS=2(1-S)\sum_kw_k\,dB_k$ for $S=\sum_jw_j$.
Since $S(0)=1$, we have $S(t)=1$. Each coordinate has the representation
\[
  w_j(t)=\frac1N\exp\left\{
    2\int_0^t\left(dB_j-\sum_kw_k\,dB_k\right)
  -2\int_0^t(1-2w_j+F_2)\,ds\right\}.
\]
Thus all coordinates are positive while the local solution exists.
Normalization keeps the solution in the simplex, where the coefficients
are bounded, so it cannot explode. The solution is therefore global,
and the exponential remains positive on every finite interval.
For $N\ge2$, this gives $0<w_j(t)<1$. Each $w_j$ is consequently a
bounded martingale.

Applying It\^o's formula to $w_j^p$ now gives
\begin{align*}
  d(w_j^p)
  &=p w_j^{p-1}\,dw_j
  +\frac12p(p-1)w_j^{p-2}\,d\langle w_j\rangle_t\\
  &=p w_j^{p-1}\,dw_j
  +2p(p-1)w_j^p(1-2w_j+F_2)\,dt.
\end{align*}
For $1<p<2$, we first stop away from zero. After substitution, the
stochastic integrands and drift terms are bounded, which allows us to
remove that stopping and take expectations.

Summing over $j$, the stochastic integral has mean zero. The drift sum is
\begin{align*}
  \sum_j w_j^p(1-2w_j+F_2)
  &=\sum_j w_j^p-2\sum_j w_j^{p+1}
  +F_2\sum_j w_j^p\\
  &=(1+F_2)F_p-2F_{p+1}.
\end{align*}
Hence
\begin{equation}
  \frac{d}{dt}\E F_p(t)
  =2p(p-1)\E\bigl[(1+F_2(t))F_p(t)-2F_{p+1}(t)\bigr].
  \label{eq:moment-evolution}
\end{equation}

\subsection{Closing the moment estimate}

The right side contains both $F_2F_p$ and $F_{p+1}$. To obtain an
inequality involving only $\E F_p$, we compare these terms using the
normalization $\sum_j w_j=1$. At any fixed time,
\begin{align*}
  F_{p+1}-F_2F_p
  &=\sum_{i,j}\bigl(w_i^{p+1}w_j-w_i^2w_j^p\bigr)\\
  &=\sum_{i,j}w_iw_j
  \bigl(w_i^p-w_iw_j^{p-1}\bigr).
\end{align*}
Averaging this expression with the same sum after
interchanging $i$ and $j$ gives
\begin{align*}
  F_{p+1}-F_2F_p
  &=\frac12\sum_{i,j}w_iw_j
  \bigl(w_i^p+w_j^p-w_iw_j^{p-1}-w_jw_i^{p-1}\bigr)\\
  &=\frac12\sum_{i,j}w_iw_j
  (w_i-w_j)(w_i^{p-1}-w_j^{p-1})\ge0.
\end{align*}
Each summand in the final expression is nonnegative, because
$x\mapsto x^{p-1}$ is increasing. We therefore have
\[
  (1+F_2)F_p-2F_{p+1}
  \le F_p-F_{p+1}\le F_p.
\]
Substituting this in \eqref{eq:moment-evolution} and applying
Gronwall's inequality yields
\begin{equation}
  \E F_p(t)\le F_p(0)e^{2p(p-1)t}
  =N^{1-p}e^{2p(p-1)t}.
  \label{eq:moment-bound}
\end{equation}

\subsection{Choosing the power}

We return to the largest weight. The inequality $M^p\le F_p$, followed
by Jensen's inequality for the concave function $x^{1/p}$, gives
\begin{align*}
  \E M(t)&\le\E\bigl[F_p(t)^{1/p}\bigr]
  \le\bigl(\E F_p(t)\bigr)^{1/p}\\
  &\le\exp\left\{\left(\frac1p-1\right)\ln N+2(p-1)t\right\}.
\end{align*}
For a fixed $0<t<\tfrac12\ln N$, we can choose any $p>1$. The derivative of
the exponent with respect to $p$ is $-\frac{\ln N}{p^2}+2t$, so the minimizing
choice is
\[
  p=\sqrt{\frac{\ln N}{2t}}>1.
\]
At this value, $\frac{\ln N}{p}=\sqrt{2t\ln N}$ and
$2pt=\sqrt{2t\ln N}$. Thus
\begin{align*}
  \left(\frac1p-1\right)\ln N+2(p-1)t
  &=2\sqrt{2t\ln N}-\ln N-2t\\
  &=-(\sqrt{\ln N}-\sqrt{2t})^2.
\end{align*}
We have proved the fixed-time bound
\begin{equation}
  \E M(t)\le e^{-(\sqrt{\ln N}-\sqrt{2t})^2},
  \qquad 0<t<\tfrac12\ln N.
  \label{eq:maximum-bound}
\end{equation}

\subsection{From fixed-time bound to first-crossing bound}

Let $\sigma= \min\{T_N, t\}$.
Each $w_j$ is a bounded martingale, so optional sampling at the bounded
stopping time $\sigma$ gives
$\E[w_j(t)\mid\mathcal F_\sigma]=w_j(\sigma)$
\cite{KaratzasShreve}. Since $M(t)\ge w_j(t)$ for every $j$,
\[
  \E[M(t)\mid\mathcal F_\sigma]
  \ge\max_j\E[w_j(t)\mid\mathcal F_\sigma]
  =\max_jw_j(\sigma)=M(\sigma).
\]
Taking expectations gives $\E M(\sigma)\le\E M(t)$.
On $\{T_N\le t\}$, continuity gives
$M(\sigma)=1-\delta$, while on its complement $M(\sigma)\ge0$.
Therefore
\[
  (1-\delta)\PP(T_N\le t)\le\E M(\sigma)\le\E M(t).
\]
The stopping time is bounded by the deterministic time $t$, so we do
not need to know yet that $T_N$ is finite.
Combining this with \eqref{eq:maximum-bound} proves
\begin{equation}
  \PP(T_N\le t)
  \le\frac{e^{-(\sqrt{\ln N}-\sqrt{2t})^2}}{1-\delta},
  \qquad 0<t<\tfrac12\ln N.
  \label{eq:early-collapse}
\end{equation}

We now obtain the more explicit early-tail bound. For $0<K<\sqrt{\ln N}/2$, set
$t_-=\tfrac12\ln N-K\sqrt{\ln N}$. Rationalizing the difference of
square roots gives
\[
  \sqrt{\ln N}-\sqrt{2t_-}
  =\frac{\ln N-2t_-}{\sqrt{\ln N}+\sqrt{2t_-}}
  =\frac{2K}{1+\sqrt{1-2K/\sqrt{\ln N}}}\ge K.
\]
Therefore
\begin{equation}
  \PP\left(T_N\le\frac12\ln N-K\sqrt{\ln N}\right)
  \le\frac{e^{-K^2}}{1-\delta}.
  \label{eq:lower-fluctuation}
\end{equation}
When $K\ge\sqrt{\ln N}/2$, the time on the left is nonpositive and the
event is empty, so the bound still holds. This estimate controls
crossings by every site at once and proves \eqref{eq:main-early}.

\section{Late collapse}
\label{sec:entropy}
We now show that the collapse time is unlikely to be much longer than
$\tfrac12\ln N$.
To do this, we measure how much uncertainty remains
via the entropy
\[
  H(t):=-\sum_jw_j(t)\ln w_j(t),\qquad 0\le H(t)\le\ln N.
\]
The initial entropy is $H(0)=\ln N$.

\subsection{The entropy identity}

For $f(x)=-x\ln x$, $f'(x)=-(\ln x+1)$ and $f''(x)=-1/x$.
Using the quadratic variation calculated above, It\^o's formula gives
\begin{align*}
  dH
  &=-\sum_j(\ln w_j+1)\,dw_j
  -\frac12\sum_j\frac{1}{w_j}\,d\langle w_j\rangle\\
  &=d\mathcal M_t-2\sum_jw_j(1-2w_j+F_2)\,dt\\
  &=d\mathcal M_t-2(1-F_2)\,dt,
\end{align*}
where
\[
  d\mathcal M_t=-2\sum_jw_j(\ln w_j+H)\,dB_j.
\]
This stochastic integral is a square-integrable martingale on every
bounded time interval. Indeed, $x\ln x$ extends continuously to zero
and $H\le\ln N$, so its coefficients are bounded for each fixed $N$.
Taking expectations yields
\begin{equation}
  \E H(t)=\ln N-2t+2\int_0^t\E F_2(s)\,ds.
  \label{eq:entropy-identity}
\end{equation}

\subsection{How much entropy is left at
\texorpdfstring{$t=\tfrac12\ln N$}{t=(ln N)/2}?}

Since $w_j\le M$ and $\sum_jw_j=1$,
\[
  F_2=\sum_jw_j^2\le M\sum_jw_j=M.
\]
At $t=\tfrac12\ln N$, the first two terms in \eqref{eq:entropy-identity} cancel.
The maximum estimate \eqref{eq:maximum-bound} therefore gives
\begin{align}
  \E H\left(\tfrac12\ln N\right)
  &=2\int_0^{\frac12\ln N}\E F_2(s)\,ds\notag\\
  &\le2\int_0^{\frac12\ln N} e^{-(\sqrt{\ln N}-\sqrt{2s})^2}\,ds\notag\\
  &=2\int_0^{\sqrt{\ln N}}(\sqrt{\ln N}-u)e^{-u^2}\,du
  \le\sqrt{\pi\ln N}.
  \label{eq:entropy-at-a}
\end{align}
The substitution is $u=\sqrt{\ln N}-\sqrt{2s}$, so
$ds=-(\sqrt{\ln N}-u)\,du$.
For the last inequality, use $0\le\sqrt{\ln N}-u\le\sqrt{\ln N}$ on this
interval and then enlarge the integration interval:
\begin{align*}
  2\int_0^{\sqrt{\ln N}}(\sqrt{\ln N}-u)e^{-u^2}\,du
  &\le2\sqrt{\ln N}\int_0^{\sqrt{\ln N}}e^{-u^2}\,du\\
  &\le2\sqrt{\ln N}\int_0^\infty e^{-u^2}\,du\\
  &=2\sqrt{\ln N}\cdot\frac{\sqrt\pi}{2}=\sqrt{\pi\ln N}.
\end{align*}
Enlarging the interval increases the integral because its integrand
is nonnegative. The last equality uses the Gaussian integral
$\int_0^\infty e^{-u^2}\,du=\sqrt\pi/2$.

\subsection{Late-tail bound}

Let
\[
  \rho:=\inf\bigl\{t\ge\tfrac12\ln N:M(t)\ge1-\delta\bigr\}.
\]
This is the first crossing \emph{at or after} $\tfrac12\ln N$, and $T_N\le\rho$.
It is useful because an earlier crossing need not persist until $\tfrac12\ln N$.
For $\tfrac12\ln N\le s<\rho$,
\[
  F_2(s)\le M(s)<1-\delta,\qquad 1-F_2(s)\ge\delta.
\]
Thus the entropy drift is at most $-2\delta$ throughout this interval.
We cannot yet apply optional sampling at a possibly unbounded $\rho$.
Fix a deterministic $b>\tfrac12\ln N$ and first stop at
$\sigma_b:=\rho\wedge b$.
Since $\mathcal M$ is square-integrable on $[0,b]$ and
$\tfrac12\ln N\le\sigma_b\le b$, optional sampling \cite{KaratzasShreve} gives
$\E[\mathcal M_{\sigma_b}-\mathcal M_{\frac12\ln N}]=0$.
Integrating the entropy identity and taking expectations therefore gives
\begin{align*}
  \E H(\sigma_b)
  &=\E H\left(\tfrac12\ln N\right)
  -2\E\int_{\frac12\ln N}^{\sigma_b}(1-F_2(s))\,ds\\
  &\le\E H\left(\tfrac12\ln N\right)
  -2\delta\,\E\left(\sigma_b-\tfrac12\ln N\right).
\end{align*}
Entropy is nonnegative, so
\[
  \E\left(\sigma_b-\tfrac12\ln N\right)
  \le\frac{\E H\left(\tfrac12\ln N\right)}{2\delta}.
\]
Now $\sigma_b-\tfrac12\ln N\uparrow\rho-\tfrac12\ln N$ as $b\to\infty$.
Monotone convergence
and \eqref{eq:entropy-at-a} imply
\begin{equation}
  \E\left(\rho-\tfrac12\ln N\right)
  \le\frac{\E H\left(\tfrac12\ln N\right)}{2\delta}
  \le\frac{\sqrt{\pi\ln N}}{2\delta}.
  \label{eq:residual}
\end{equation}
In particular, $\rho$ and $T_N$ are almost surely finite. Write
$x_+:=\max\{x,0\}$. Since
$\left(T_N-\tfrac12\ln N\right)_+\le\rho-\tfrac12\ln N$,
we obtain the positive-deviation bound
\begin{equation}
  \E\left(T_N-\frac12\ln N\right)_+
  \le\frac{\sqrt{\pi\ln N}}{2\delta}.
  \label{eq:positive-deviation}
\end{equation}

For the late-tail bound, recall that Markov's inequality states
$\PP(Z>r)\le\E Z/r$ for a nonnegative random variable $Z$ and $r>0$.
Here we take $Z=\rho-\tfrac12\ln N\ge0$ and $r=K\sqrt{\ln N}$ with $K>0$.
Since $T_N\le\rho$, the event
$\{T_N>\tfrac12\ln N+K\sqrt{\ln N}\}$ is contained
in $\{\rho-\tfrac12\ln N>K\sqrt{\ln N}\}$. Thus
\begin{align*}
  \PP\left(T_N>\frac12\ln N+K\sqrt{\ln N}\right)
  &\le\PP\left(\rho-\frac12\ln N>K\sqrt{\ln N}\right)\\
  &\le\frac{\E\left(\rho-\frac12\ln N\right)}{K\sqrt{\ln N}}\\
  &\le\frac{\sqrt{\pi\ln N}}{2\delta K\sqrt{\ln N}}
  =\frac{\sqrt\pi}{2\delta K},
\end{align*}
where the last inequality uses \eqref{eq:residual}.
This proves \eqref{eq:main-late}.

This is the standard negative-drift stopping argument: a nonnegative
quantity with expected decrease at least $c$ per unit time bounds the
expected waiting time by its initial expected value divided by $c$.

\subsection{Completing the mean estimate and the law of large numbers}

The identity $\E Z=\int_0^\infty\PP(Z>u)\,du$ for $Z\ge0$ and
\eqref{eq:lower-fluctuation} give
\begin{align}
  \E\left(\frac12\ln N-T_N\right)_+
  &\le\int_0^{\frac12\ln N}\PP\left(T_N\le\frac12\ln N-u\right)\,du\notag\\
  &\le\frac1{1-\delta}\int_0^\infty e^{-u^2/(\ln N)}\,du
  =\frac{\sqrt{\pi\ln N}}{2(1-\delta)}.
  \label{eq:negative-deviation}
\end{align}
Adding the positive and negative deviations bounded in
\eqref{eq:positive-deviation} and \eqref{eq:negative-deviation} gives
\[
  \E\left|T_N-\frac12\ln N\right|
  \le\frac{\sqrt{\pi\ln N}}{2\delta(1-\delta)}.
\]
Dividing by $\ln N$ proves the normalized estimate
\eqref{eq:main}. Its right side tends to zero, so this proves the
$L^1$ law of large numbers. Convergence in probability follows from
Markov's inequality, and
$\left|\E T_N-\tfrac12\ln N\right|\le\E\left|T_N-\tfrac12\ln N\right|$
proves the mean asymptotic.

The entire proof of Theorem~\ref{thm:main} is now complete using
\eqref{eq:weight-sde}. Before turning to the measurement records, we
draw one more consequence from the entropy identity.

\subsection{Complete collapse at long times}
\label{sec:complete-collapse}

Theorem~\ref{thm:main} concerns the first time a weight reaches
$1-\delta$. A separate question is whether the measurement ever
finishes, that is, whether in every run one weight tends to one. This
is what rules out a lasting superposition in the model, and
Section~\ref{sec:calibration} uses it to define the outcome
of the measurement. For $N>2$, the stabilization theorem in
Appendix~B of \cite{VJS} shows that $\E[w_i(t)w_j(t)]\to0$ for every
pair $i\ne j$. The entropy identity gives an almost-sure form of this
statement for every $N\ge2$, together with an exact identity.

\begin{proposition}[Complete collapse]
  \label{cor:complete-collapse}
  For every $N\ge2$, almost surely exactly one weight converges to one,
  and all the others converge to zero. If $m^*$ denotes the index of
  that site, then for every $j$
  \begin{equation}
    \PP(m^*=j)=\frac{1}{N}\,.
    \label{eq:eventual}
  \end{equation}
  Moreover,
  \begin{equation}
    \int_0^\infty\E\Bigl[1-\sum_jw_j(t)^2\Bigr]\,dt=\frac12\ln N.
    \label{eq:overlap-budget}
  \end{equation}
\end{proposition}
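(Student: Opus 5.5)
Three ingredients drive the argument: martingale convergence of the weights, the entropy identity \eqref{eq:entropy-identity} pushed to $t\to\infty$, and bounded convergence to identify the limit law. Each $w_j$ is a bounded martingale, as shown in Section~\ref{sec:lower}. By the martingale convergence theorem, $w_j(t)\to w_j(\infty)$ almost surely and in $L^1$ for every $j$. The limits satisfy $\sum_jw_j(\infty)=1$ and $w_j(\infty)\ge0$, so the limit vector lies in the simplex. It remains to show that this limit is almost surely a vertex.

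Next I let $t\to\infty$ in \eqref{eq:entropy-identity}, rewritten as
\[
  2\int_0^t\E\bigl[1-F_2(s)\bigr]\,ds=\ln N-\E H(t).
\]
The left side is nondecreasing in $t$, and the right side is at most $\ln N$. Monotone convergence (the integrand is nonnegative) and Fubini then give
\[
  \E\int_0^\infty\bigl(1-F_2(s)\bigr)\,ds\le\tfrac12\ln N<\infty.
\]
Hence $\int_0^\infty(1-F_2)\,ds<\infty$ almost surely. Since $F_2(t)\to F_2(\infty):=\sum_jw_j(\infty)^2$ almost surely, the integrability forces $1-F_2(\infty)=0$ almost surely. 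Otherwise the integrand would stay bounded below by a positive constant for all large $s$, and the integral would diverge. On the simplex, $\sum_jw_j^2=1$ holds only at the vertices, since $\sum w_j^2\le\max_j w_j\le 1$ with equality forcing a single nonzero coordinate. Therefore almost surely exactly one weight tends to one and the rest tend to zero, which defines $m^*$.

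For \eqref{eq:eventual}, observe that $w_j(\infty)=\mathbf 1\{m^*=j\}$. Bounded convergence together with the martingale property gives $\PP(m^*=j)=\E w_j(\infty)=\lim_t\E w_j(t)=w_j(0)=1/N$.

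For the exact identity \eqref{eq:overlap-budget}, I must show $\E H(t)\to0$. The entropy $H(t)$ is continuous in the weights and bounded by $\ln N$, and $H(\infty)=0$ at a vertex. So $H(t)\to0$ almost surely, and bounded convergence gives $\E H(t)\to0$. Substituting into the rearranged identity above yields
\[
  2\int_0^\infty\E[1-F_2]\,dt=\ln N,
\]
which is \eqref{eq:overlap-budget}.

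The main obstacle is the step from a finite integral to a vertex limit. What makes it work is that $F_2(t)$ has an almost-sure limit, which comes from martingale convergence of each coordinate. Without that limit, finiteness of the integral alone would not pin down the endpoint.
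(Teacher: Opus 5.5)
Your proposal is correct and follows essentially the same route as the paper's proof: bounded-martingale convergence of each $w_j$, then the entropy identity pushed to $t\to\infty$ via Tonelli and monotone convergence to get $\int_0^\infty(1-F_2)\,ds<\infty$ almost surely, then the almost-sure limit of $F_2$ forcing a vertex, the martingale property for $\PP(m^*=j)=1/N$, and bounded convergence $\E H(t)\to0$ for the exact budget. The differences are cosmetic: you identify the vertex via $\sum_jw_j^2\le\max_jw_j$ where the paper uses $\sum_jq_j(1-q_j)=0$, and you use bounded rather than $L^1$ convergence to pass to the limit of $\E w_j(t)$.
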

\begin{proof}
  We first show that the weights converge to a single site. Each
  $w_j$ is a bounded martingale, so it converges almost surely and
  in $L^1$ to a limit $q_j\in[0,1]$ \cite{KaratzasShreve}. Taking limits in
  $\sum_jw_j(t)=1$ gives $\sum_jq_j=1$. It remains to show that
  exactly one of these limits is positive.

  The entropy identity \eqref{eq:entropy-identity} and $H\ge0$ give
  \[
    2\int_0^t\E\bigl[1-F_2(s)\bigr]\,ds
    =\ln N-\E H(t)\le\ln N.
  \]
  Since $1-F_2\ge0$, Tonelli's theorem and monotone convergence yield
  \[
    \E\int_0^\infty\bigl[1-F_2(s)\bigr]\,ds
    =\lim_{t\to\infty}\int_0^t\E\bigl[1-F_2(s)\bigr]\,ds
    \le\frac12\ln N.
  \]
  A nonnegative random variable with finite expectation is finite
  almost surely. Thus the integral of $1-F_2$ is finite on almost
  every path.

  On any such path where the weights converge,
  $1-F_2(t)\to1-\sum_jq_j^2$. If this limit were positive, the
  integrand would stay bounded away from zero for all sufficiently
  large times, and its integral would be infinite. Therefore
  $\sum_jq_j^2=1$ almost surely. Together with $\sum_jq_j=1$, this gives
  \[
    \sum_jq_j(1-q_j)=0.
  \]
  Every summand is nonnegative, so each $q_j$ is either zero or one.
  Their sum is one, so exactly one equals one. This proves complete
  collapse and defines the eventual site $m^*$.

  We next determine its probabilities. Since
  $q_j=\mathbf1_{\{m^*=j\}}$, convergence in $L^1$ and the martingale
  property give
  \[
    \PP(m^*=j)=\E q_j
    =\lim_{t\to\infty}\E w_j(t)=w_j(0)=\frac1N,
  \]
  proving \eqref{eq:eventual}.

  Finally, complete collapse implies $H(t)\to0$ almost surely,
  because $-x\ln x$ is continuous on $[0,1]$ when its value at zero
  is taken to be zero. Since $0\le H(t)\le\ln N$, bounded convergence
  gives $\E H(t)\to0$. Letting $t\to\infty$ in the entropy identity
  above now gives
  \[
    2\int_0^\infty\E\bigl[1-F_2(s)\bigr]\,ds=\ln N,
  \]
  which is \eqref{eq:overlap-budget}.
\end{proof}

Since $1-F_2=\sum_{i\ne j}w_iw_j$, the proposition concerns the same
overlap as the stabilization theorem of \cite{VJS}. For equal initial
weights, bounded convergence recovers that theorem from the proposition,
now including $N=2$.

The identity \eqref{eq:overlap-budget} also suggests the scale in
Theorem~\ref{thm:main}. Here $\sum_jw_j^2$ is the inverse
participation ratio of the occupation probabilities. Before collapse
the weight is spread over many sites, and $1-\sum_jw_j^2$ is close to
one. After collapse it is close to zero. The expected time spent
spread out is therefore about $\tfrac12\ln N$. This heuristic does not
by itself control the threshold crossing. That is the work done in
Sections~\ref{sec:lower} and~\ref{sec:entropy}.
Section~\ref{sec:calibration} uses the eventual site $m^*$ to give the
threshold its measurement meaning.

\section{Detector records and the reliability of the first crossing}
\label{sec:calibration}

A weight can reach $1-\delta$ and later fall back. Is the first site to
cross then the site on which the measurement eventually settles?

\subsection{The weights as Bayesian probabilities}
\label{sec:repr}

Appendix~A of \cite{VJS} derives the weak-measurement dynamics from
the stochastic master equation for ideal local detectors. Its
Eq.~(A10) is written in terms of $z_j$, the Bloch coordinate that we
call $U_j$. In the single-excitation model, setting $w_j=(1+z_j)/2$ and
measuring time in units of $\tau_m$ turns that equation into
\eqref{eq:weight-sde}.
With the detector means calibrated to $\pm1$, the integrated form of
Eq.~(A4) there gives the record of detector $j$,
\begin{equation}
  Y_j(t):=B_j(t)+\int_0^tU_j(s)\,ds,
  \qquad U_j=2w_j-1.
  \label{eq:Y-definition}
\end{equation}
In differential form, $dY_j=U_j\,dt+dB_j$. Each detector reads $+1$ on
average when its site is occupied and $-1$ when it is empty, with
Gaussian noise, as in Eq.~(16) of \cite{VJS}. In the current state the site
would be found occupied with probability $w_j$, so the expected readout
is $(+1)w_j+(-1)(1-w_j)=U_j$, the Bloch coordinate of
Section~\ref{sec:model}.

Every weight
is positive at finite times (Section~\ref{sec:lower}), so It\^o's
formula and \eqref{eq:weight-sde} give
\[
  d\ln w_j
  =2\Bigl(dB_j-\sum_kw_k\,dB_k\Bigr)
  -2\Bigl(1-2w_j+\sum_kw_k^2\Bigr)dt.
\]
In the difference of two such equations the common terms cancel, and
\[
  d\ln\frac{w_j}{w_i}
  =2(dB_j-dB_i)+4(w_j-w_i)\,dt
  =2\,d(Y_j-Y_i).
\]
Equal initial weights and normalization give
\begin{equation}
  w_j(t)=\frac{e^{2Y_j(t)}}{\sum_k e^{2Y_k(t)}},
  \label{eq:softmax}
\end{equation}
which is the Bayes posterior for equal prior probabilities $1/N$.

Let $\mathcal F_t^Y$ denote the information in all records up to time
$t$, namely the augmented $\sigma$-algebra generated by
$Y_1(s),\ldots,Y_N(s)$ for $s\le t$. The records are built from the
noise and from the weights, which form the strong solution of
\eqref{eq:weight-sde}, so $\mathcal F_t^Y\subseteq\mathcal F_t$.
Conversely, \eqref{eq:softmax} recovers the weights from the records,
and $B_j(t)=Y_j(t)-\int_0^tU_j(s)\,ds$ then recovers the noise, which
generates $\mathcal F_t$. Hence $\mathcal F_t\subseteq\mathcal F_t^Y$,
and the records and the driving noises carry the same information,
$\mathcal F_t^Y=\mathcal F_t$. 

Let $m^*$ be the eventual site of
Proposition~\ref{cor:complete-collapse}. The next proposition makes this
Bayesian reading exact. The weights are the probabilities of the
eventual site given the records.

\begin{proposition}[The weights as posterior probabilities]
  \label{prop:records}
  For every $t\ge0$ and every site $j$,
  \begin{equation}
    w_j(t)=\PP(m^*=j\mid\mathcal F_t^Y).
    \label{eq:posterior-records}
  \end{equation}
\end{proposition}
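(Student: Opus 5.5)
The plan is to combine the martingale property of the weights with the identification $\mathcal F_t^Y=\mathcal F_t$ established just before the statement. Fix $t\ge0$ and a site $j$.

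\emph{Step 1: the weight is the conditional expectation of its limit.} By Section~\ref{sec:lower}, $w_j$ is a bounded martingale with respect to $(\mathcal F_t)$. By the proof of Proposition~\ref{cor:complete-collapse}, it converges almost surely and in $L^1$ to $q_j=\mathbf 1_{\{m^*=j\}}$. For every $s\ge t$ the martingale property gives $w_j(t)=\E[w_j(s)\mid\mathcal F_t]$. Letting $s\to\infty$, conditional expectation is an $L^1$ contraction, so $\E[w_j(s)\mid\mathcal F_t]\to\E[q_j\mid\mathcal F_t]$ in $L^1$. Since the left side is constant in $s$, we get
\[
  w_j(t)=\E[\mathbf 1_{\{m^*=j\}}\mid\mathcal F_t]=\PP(m^*=j\mid\mathcal F_t)
\]
almost surely. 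Equivalently, a bounded martingale is closed by its almost-sure limit, and this can be quoted from \cite{KaratzasShreve}.

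\emph{Step 2: replace the filtration.} The discussion before the proposition shows $\mathcal F_t^Y=\mathcal F_t$. Substituting this into Step 1 yields \eqref{eq:posterior-records}. One should also note that $m^*$ is measurable with respect to $\mathcal F_\infty=\bigvee_t\mathcal F_t$. It is defined from the almost-sure limits $q_j$ and is modified on a null set, which the augmentation absorbs. Hence the event $\{m^*=j\}$ is a legitimate event, and conditioning on it makes sense.

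\emph{Main obstacle.} The only delicate point is the filtration identity, which the paper has already argued. If one wanted to be careful, the step to check is that the inclusion $\mathcal F_t\subseteq\mathcal F_t^Y$ is valid. Equation~\eqref{eq:softmax} recovers $w_j(s)$ for $s\le t$ as a continuous function of $Y(s)$. The integral $\int_0^tU_j\,ds$ is then $\mathcal F_t^Y$-measurable, so $B_j(s)$ is $\mathcal F_t^Y$-measurable for all $s\le t$. The augmented natural filtration of $B$ is therefore contained in $\mathcal F_t^Y$.

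Everything else is the standard closure of a uniformly integrable martingale. The limit identification from Proposition~\ref{cor:complete-collapse} is what turns the abstract closing variable into the indicator of the eventual outcome.
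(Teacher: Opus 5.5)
Your proposal is correct and follows the paper's proof. The paper likewise closes the bounded martingale $w_j$ by its limit $q_j=\mathbf 1_{\{m^*=j\}}$ from Proposition~\ref{cor:complete-collapse} and then substitutes $\mathcal F_t=\mathcal F_t^Y$; your $L^1$-contraction limit, the measurability remark on $m^*$, and the check of $\mathcal F_t\subseteq\mathcal F_t^Y$ only spell out details the paper states briefly.
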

\begin{proof}
  By Proposition~\ref{cor:complete-collapse}, each bounded martingale
  $w_j(t)$ converges almost surely and in $L^1$ to
  $q_j=\mathbf1_{\{m^*=j\}}$. A bounded martingale is uniformly
  integrable, so it is closed by its limit,
  $w_j(t)=\E[q_j\mid\mathcal F_t]=\PP(m^*=j\mid\mathcal F_t)$. Since
  $\mathcal F_t=\mathcal F_t^Y$, this is \eqref{eq:posterior-records}.
\end{proof}

Here $m^*$ is the site on which the run eventually settles. The
proposition does not say that the excitation occupied that site before
it was measured.

\subsection{Reliability at the first crossing}
\label{sec:reliability}

We can now answer the question posed at the start of this section. Let
$J$ be the first site to reach $1-\delta$. It is unique because
$1-\delta>1/2$. An observer who stops at $T_N$ and names $J$ uses only
the records up to $T_N$. The eventual site, by contrast, depends on the
whole future record. Its own crossing time,
\[
  T_*:=\inf\{t\ge0:w_{m^*}(t)\ge1-\delta\},
\]
is finite because $w_{m^*}\to1$. Unlike $T_N$, it cannot be recognized
when it happens, since that requires knowing $m^*$. Given $m^*=i$, it
is simply the threshold time of $w_i$.

The next proposition says that the observer's answer is right with
probability exactly $1-\delta$, whatever the records look like at the
moment of crossing.

\begin{proposition}[Reliability at the first crossing]
  \label{prop:born}
  Almost surely,
  \[
    \PP(J=m^*\mid\mathcal F^Y_{T_N})=1-\delta.
  \]
  In particular, correctness is independent of $T_N$, and
  \begin{equation}
    T_N\le T_*,\qquad \PP(T_N\ne T_*)=\delta.
    \label{eq:coupling}
  \end{equation}
  For every deterministic $t$,
  \begin{equation}
    0\le\PP(T_N\le t)-\PP(T_*\le t)\le\delta.
    \label{eq:cdf-distance}
  \end{equation}
\end{proposition}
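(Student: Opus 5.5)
The plan is to apply Proposition~\ref{prop:records} at the stopping time $T_N$, which is almost surely finite by Theorem~\ref{thm:main}. Since each $w_j$ is a bounded martingale closed by $q_j=\mathbf 1_{\{m^*=j\}}$, optional sampling for uniformly integrable martingales at the possibly unbounded stopping time $T_N$ gives
\[
  w_j(T_N)=\E[q_j\mid\mathcal F_{T_N}]=\PP(m^*=j\mid\mathcal F^Y_{T_N}).
\]
Here $\mathcal F_{T_N}=\mathcal F^Y_{T_N}$ because the two filtrations coincide. The index $J$ is $\mathcal F_{T_N}$-measurable, being the unique site with $w_J(T_N)\ge 1-\delta$; uniqueness holds since $1-\delta>1/2$. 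By path continuity, $w_J(T_N)=1-\delta$ exactly. Summing over $j$ the identities $\mathbf 1_{\{J=j\}}\PP(m^*=j\mid\mathcal F_{T_N})=\mathbf 1_{\{J=j\}}w_j(T_N)$ then yields $\PP(J=m^*\mid\mathcal F^Y_{T_N})=w_J(T_N)=1-\delta$ almost surely. Because this conditional probability is a constant, the event $\{J=m^*\}$ is independent of $\mathcal F_{T_N}$, and in particular of $T_N$.

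For \eqref{eq:coupling}, observe that $T_N$ is the first time any weight reaches $1-\delta$, so $T_N\le T_*$ trivially. I next claim that $\{T_N\ne T_*\}=\{J\ne m^*\}$ up to a null set. If $J=m^*$, then $w_{m^*}$ first reaches $1-\delta$ at $T_N$, so $T_*=T_N$. Conversely, if $J\ne m^*$, then at time $T_N$ we have $w_{m^*}(T_N)\le 1-w_J(T_N)=\delta<1-\delta$, so $w_{m^*}$ has not yet reached the threshold. Since $T_*$ is an infimum of a closed set of times, $T_*>T_N$ strictly. Hence
\[
  \PP(T_N\ne T_*)=\PP(J\ne m^*)=\E\bigl[\PP(J\ne m^*\mid\mathcal F_{T_N})\bigr]=\delta.
\]

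For \eqref{eq:cdf-distance}, the inclusion $\{T_*\le t\}\subseteq\{T_N\le t\}$ gives the lower bound. The difference of the two probabilities is $\PP(T_N\le t<T_*)\le\PP(T_N\ne T_*)=\delta$, which gives the upper bound.

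The main obstacle is justifying optional sampling at the unbounded time $T_N$. I would handle it by noting that a bounded martingale closed by $q_j$ is uniformly integrable, so the optional sampling theorem applies to all stopping times, including those with infinite value. Alternatively, one can sample at $T_N\wedge n$ and let $n\to\infty$ using bounded convergence together with $T_N<\infty$ almost surely. A smaller point is verifying that $J$ and $w_J(T_N)$ are $\mathcal F_{T_N}$-measurable, which follows from progressive measurability of the continuous processes $w_j$.
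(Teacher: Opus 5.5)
Your proposal is correct and follows the paper's proof essentially step for step. Both arguments use optional sampling of the closed martingales $w_j$ at $T_N$ to get $w_j(T_N)=\PP(m^*=j\mid\mathcal F^Y_{T_N})$, use continuity to get $w_J(T_N)=1-\delta$, split into $J=m^*\Rightarrow T_*=T_N$ versus $J\ne m^*\Rightarrow w_{m^*}(T_N)\le\delta$, and conclude with $\PP(T_N\le t<T_*)\le\PP(T_N\ne T_*)$. The differences are cosmetic: you take finiteness of $T_N$ from Theorem~\ref{thm:main} rather than from $T_N\le T_*<\infty$, and you write out $\PP(T_N\ne T_*)=\E[\PP(J\ne m^*\mid\mathcal F_{T_N})]=\delta$, which the paper leaves implicit.
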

\begin{proof}
  By definition $T_N\le T_*$, and $T_*<\infty$ because $w_{m^*}\to1$.
  Both times are therefore almost surely finite. By \eqref{eq:softmax},
  the weights are functions of the records, so $T_N$ is a stopping
  time of $(\mathcal F_t^Y)$ and $J$ is determined by the records up to
  $T_N$. By Proposition~\ref{prop:records}, each
  $w_j(t)=\E[\mathbf1_{\{m^*=j\}}\mid\mathcal F_t^Y]$ is a martingale
  closed by a bounded random variable. Optional sampling
  \cite{KaratzasShreve} therefore applies at $T_N$ and gives
  $w_j(T_N)=\PP(m^*=j\mid\mathcal F^Y_{T_N})$. Since
  $w_j(0)=1/N<1-\delta$, continuity gives $w_J(T_N)=1-\delta$, hence
  \[
    \PP(J=m^*\mid\mathcal F^Y_{T_N})
    =\sum_j\mathbf1_{\{J=j\}}w_j(T_N)=1-\delta.
  \]
  The constant conditional probability makes correctness independent
  of the stopped observations, including $T_N$. If $J=m^*$, then
  $T_*=T_N$. If $J\ne m^*$, then $w_{m^*}(T_N)\le1-w_J(T_N)=\delta$,
  which is below $1-\delta$, so $T_*>T_N$. This gives
  \eqref{eq:coupling}. Finally,
  $\PP(T_N\le t)-\PP(T_*\le t)=\PP(T_N\le t<T_*)\le\PP(T_N\ne T_*)$,
  which gives \eqref{eq:cdf-distance}.
\end{proof}

\section{Discussion and Conclusions}
\label{sec:discussion}

The result of this paper is the leading asymptotic law
$\E T_N/\ln N\to1/2$ for equal initial weights and a fixed threshold.
It corrects the previously proposed double-logarithmic scaling.
The power-sum and entropy arguments establish this law directly from
the weight equation. The reliability proposition explains what the
stopping criterion means for the eventual measurement outcome.
For $N=10^{80}$, the leading logarithmic estimate is about $128
\,\langle T_2\rangle$,
where $\langle T_2\rangle = 0.719304 \,\tau_m$.
In the physical units, the same leading estimate is $92.10\,\tau_m$.
For $N=10^{23}$, the corresponding estimate is about $36.81 \,
\langle T_2 \rangle$, or $26.48\,\tau_m$. These estimates retain only
the leading asymptotic term. The growth is slow but
unbounded. The Born probabilities alone do not determine this time
scale, since changing the clock preserves those probabilities.

The numerical comparison illustrates why a limited range can suggest
a double logarithm. Its fitted offset does not identify the precise
finite-$N$ mean. The threshold dependence, the correction to the
leading mean, and the distribution of crossing times are the subjects
of interest and will be studied in the future.
They are not needed to establish the
asymptotic correction proved here.

The results presented here give a rigorous, analytical scaling law
for the time for collapse of a large entangled system. It confirms
the previous numerical result that this time scale is finite for any
realistic system, although it has no upper bound in a truly infinite system.

This gives a mechanism in the weak measurement and spontaneous
collapse scenarios of quantum mechanics that allows large systems to
settle into a single, collapsed state (in the sense defined
by~\eqref{eq:collapse-time}) in a finite, fairly short time,
in distinction from a many-worlds scenario with eternal macroscopic
superpositions.

\section*{AI acknowledgement}

ChatGPT 6 Astra (OpenAI)  was used to derive the results presented here through
an extended series of conversations. 
Claude Opus 5.5 (Anthropic) was used to help restructure and edit the manuscript.
The authors have
checked and rewritten the arguments for correctness and readability.
The authors take full responsibility for the content of this work.

\bibliography{thebib}

\end{document}